\documentclass[11pt]{amsart}
\usepackage{amsfonts,epsf,amsmath}
\usepackage{epsfig}
\usepackage{graphicx}
\usepackage{latexsym}
\usepackage[]{color}
\usepackage{hyperref}
\usepackage{amssymb}

\newtheorem{theorem}{\bf Theorem}[section]

\newtheorem{conjecture}[theorem]{\bf Conjecture}

\def\newpic#1{%
       \def\emline##1##2##3##4##5##6{%
          \put(##1,##2){\special{em:point #1##3}}%
          \put(##4,##5){\special{em:point #1##6}}%
          \special{em:line #1##3,#1##6}}}
\newpic{}

\textwidth14.5cm
\textheight20cm
\oddsidemargin 0.4cm
\evensidemargin 0.4cm
\voffset-1cm

\begin{document}

\title{New results on variants of covering codes in Sierpi\'nski graphs}

\author[S. Gravier]{Sylvain Gravier}
\address{Sylvain Gravier\\F\'ed\'eration de recherche Maths \`{a} Modeler\\Also with Institut Fourier -- UMR 5582 CNRS/Universit\'e Joseph Fourier\\ 100 rue des maths, BP 74, 38402 St Martin d'H\`{e}res, France}
\thanks{Partially supported by ANR/NSC Project GraTel, ANR-09-blan-0373-01 and NSC99-2923-M-110-
001-MY3, 2010--2013}
\email{sylvain.gravier@ujf-grenoble.fr}
 
\author[M. Kov\v{s}e]{Matja\v z Kov\v{s}e}
\address{Matja\v z Kov\v{s}e\\Faculty of Natural Sciences and Mathematics, University of Maribor, Koro\v ska 160, 2000 Maribor, Slovenia\\
Also with the Institute of Mathematics, Physics and Mechanics, Ljubljana.}
\email{matjaz.kovse@gmail.com}
\author[M. Mollard]{Michel Mollard}
\address{Michel Mollard\\F\'ed\'eration de recherche Maths \`{a} Modeler\\Also with Institut Fourier -- UMR 5582 CNRS/Universit\'e Joseph Fourier\\ 100 rue des maths, BP 74, 38402 St Martin d'H\`{e}res, France}
\email{michel.mollard@ujf-grenoble.fr}

\author[J. Moncel]{Julien Moncel}
\address{Julien Moncel\\F\'ed\'eration de recherche Maths \`{a} Modeler\\Also with CNRS -- LAAS Universit\'e de Toulouse\\
UPS, INSA, INP, ISAE ; UT1, UTM, LAAS\\
7 avenue du Colonel Roche\\
31077 Toulouse Cedex 4 (France)}
\thanks{Partially supported by ANR Project IDEA, ANR-08-EMER-007, 2009--2011}
\email{julien.moncel@iut-rodez.fr}

\author[A. Parreau]{Aline Parreau}
\address{Aline Parreau\\F\'ed\'eration de recherche Maths \`{a} Modeler\\Also with Institut Fourier -- UMR 5582 CNRS/Universit\'e Joseph Fourier\\ 100 rue des maths, BP 74, 38402 St Martin d'H\`{e}res, France}
\email{aline.parreau@ujf-grenoble.fr}

\date{}

\maketitle

\begin{abstract}
In this paper we study identifying codes, locating-dominating codes, and total-dominating codes in Sierpi\'nski graphs. We compute the minimum size of such codes in Sierpi\'nski graphs.
\end{abstract}

\noindent

\bigskip\noindent
{\bf Key words}: codes in graphs; identifying codes; locating-dominating codes; total-domination; Sierpi\'nski graphs

\bigskip\noindent
{\bf AMS subject classification (2010)}: 05C78, 94B25, 05C69

\baselineskip16pt

\section{Introduction and notations}

\subsection{Sierpi\'nski graphs}

Motivated by topological studies 
from~\cite{lipe-92, milu-92} graphs $S(n,k)$ have been introduced in~\cite{klmi-97} and  
named Sierpi\'nski graphs in~\cite{klmi-02}.  In the book~\cite{lipe-09} these graphs are called Klav\v zar-Milutinovi\'c graphs. For $n\geq 2$ and $k\geq 3$, the graph $S(n,k)$ is defined on the vertex
set $\{0, 1, 2, \ldots, k-1\}^n$ (see Figure~\ref{fig:1} for $S(3,3)$ and $S(2,4)$),  where two different vertices $(i_1, i_2,
\ldots, i_n)$ and $(j_1, j_2, \ldots, j_n)$ are adjacent if and
only if there exists an index $h$ in $\{1, 2, \ldots, n\}$ such that
\begin{itemize}
\item[(i)]   $i_t = j_t$, for $t=1,\ldots,h-1$; 
\item[(ii)]  $i_h \neq j_h$; and 
\item[(iii)] $i_t = j_h$ and $j_t = i_h$ for $t=h+1,\ldots,n$.
\end{itemize}

\begin{figure}[ht!]
\centerline{\includegraphics[width=0.9\textwidth]{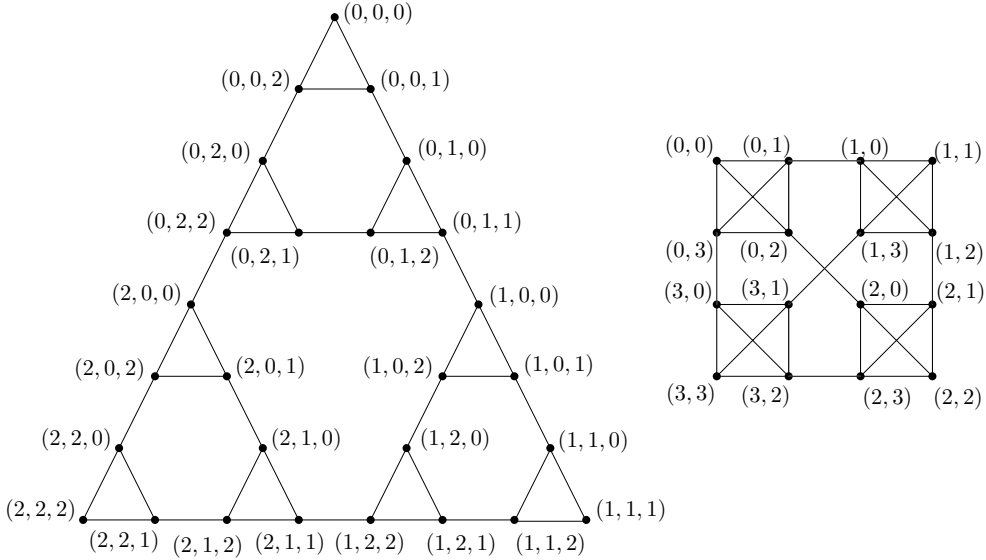}}
\caption{The Sierpi\'nski graphs $S(3,3)$ and $S(2,4)$, together with the 
corresponding vertex labelings. The extreme vertices of $S(3,3)$ are $(0,0,0)$, $(1,1,1)$ and $(2,2,2)$, whereas the extreme vertices of $S(2,4)$ are $(0,0)$, $(1,1)$, $(2,2)$, and $(3,3)$. Notice that each inner vertex $u$ has exactly one neighbour outside the $k$-clique $K(u)$, and that the set of edges $\{uv \mbox{ edge of } S(n,k) \mid K(u)\neq K(v)\}$ is a matching.}\label{fig:1}
\end{figure}

A vertex of the form $(i,i\ldots, i)$
of $S(n,k)$ is called an {\em extreme vertex\/},  all other vertices of $S(n,k)$
are called {\em inner} vertices. The extreme vertices of $S(n,k)$ are of degree $k-1$ while the degree of the inner vertices is $k$. Note also that there are exactly $k$ extreme vertices in $S(n,k)$, and that $S(n,k)$ has $k^n$ vertices. Let $u=( i_1,i_2,\ldots, i_n)$ be an arbitrary vertex of $S(n,k)$. We denote by $K(u)$ the $k$-clique induced by vertices of the form $(i_1,i_2,\ldots, i_{n-1}, j)$, $1\leq j \leq k$. Notice that the neighbourhood of an extreme vertex $u$ is $K(u)\setminus\{u\}$, and that an inner vertex $u$ has only one neighbour that does not belong to $K(u)$. Moreover, the set of edges $\{uv \mbox{ edge of } S(n,k) \mid K(u)\neq K(v)\}$ is a matching. These properties will be extensively used in the sequel.

Notice that the Sierpi\'nski graph $S(n,k)$ can be constructed inductively as follows. We start the construction with a $k$-clique, that can be seen as the Sierpi\'nski graph $S(1,k)$. To construct $S(2,k)$, one has to make $k$ copies of the $k$-clique, that will be connected to each other with an edge set in 1-to-1 correspondence with the edges of a $k$-clique. By repeating this procedure, we can then recursively construct $S(n,k)$ by connecting $k$ copies of $S(n-1,k)$ with a set of $\frac{k(k-1)}{2}$ edges (see Figure~\ref{fig:inductive_construction}). 

\begin{figure}[ht!]
\centerline{\includegraphics[width=0.9\textwidth]{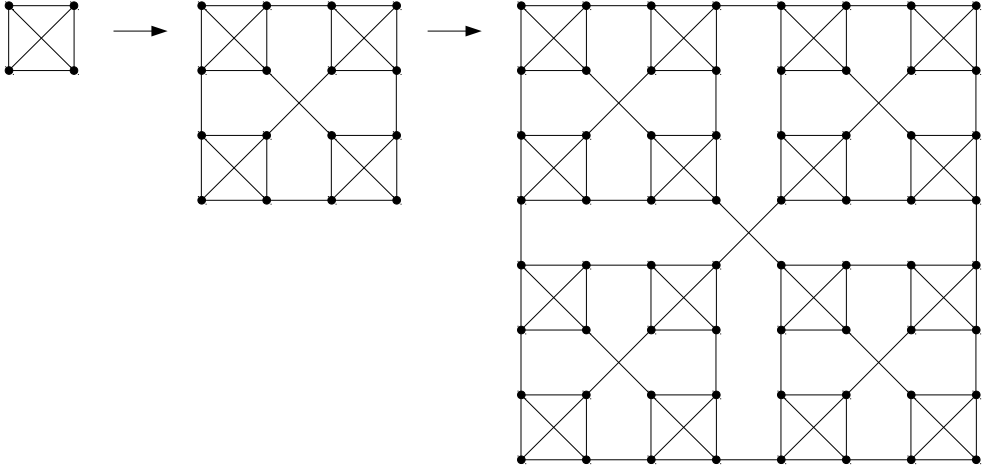}}
\caption{Illustration for $k=4$ of the recursive procedure that enables to construct a Sierpi\'nski graph $S(n,k)$ from $k$ copies of $S(n-1,k)$. From left to right are represented the starting graph $K_4$, then $S(2,4)$ and $S(3,4)$.}\label{fig:inductive_construction}
\end{figure}

Sierpi\'nski graphs $S(n,3)$ coincide with the so-called Tower of Hanoi graphs, which are studied in details in the forthcoming book \cite{book}. Another family of graphs related to Sierpi\'nski graphs are the so-called Sierpi\'nski gasket graphs, where all edges which are themselves maximal cliques are contracted. Some more variations of graphs that can be obtained from Sierpi\'nski graphs have been considered in the papers~\cite{jako-10, klmo-05, Godbole}.  Recently generalized Sierpi\'{n}ski graphs have been introduced in~\cite{grko-10}. They are defined in a similar way as Sierpi\'nski graphs, the difference lying in the fact that the starting graph can be any graph and not just a clique.

Extending the result about the existence of 1-perfect codes in the Tower of Hanoi
graphs from \cite{cune-99} and \cite{line-98}, it has been shown in \cite{klmi-02} that 1-perfect codes exist for all graphs $S(n,k)$. More precisely it is shown in \cite{klmi-02} that
$$\gamma(S(n,k))= 
\begin{cases} 
k \cdot \frac{k^{n-1}+1}{k+1},  & \mbox{if }n\mbox{ is even} \\
\frac{k^{n}+1}{k+1}, & \mbox{if }n\mbox{ is odd} 
\end{cases}$$ where $\gamma(G)$ denotes the domination number of a graph $G$. This result has been further generalized in \cite{begr-10}, where all existing $(a,b)$-codes of Sierpi\'nski graphs have been characterized. 
Sierpi\'nski graphs  have also been studied for $L(2,1)$-labelings~\cite{grkl-05}, 
crossing numbers~\cite{klmo-05}, and different types of colorings~\cite{klav-08,klja-09}. 

In this note we compute the minimum size of identifying codes, locating-dominating codes, and total-dominating codes in Sierpi\'nski graphs.

\subsection{Codes in graphs}

In a simple undirected graph $G$, let $d(u,v)$ denote the minimum number of edges of a path having $u$ and $v$ as extremities. Let $C$ be a subset of vertices of $G$, we call $C$ a \emph{code} of graph $G$. For a vertex $u$ of $G$, let us denote $I(u,C) = \{v\in C \mid d(u,v)\leq 1\}$. We say that $C$ {\it covers} a vertex $u$ if we have $I(u,C)\neq \emptyset$. In other words, $C$ covers $u$ if and only if we either have $u\in C$, or there exists $v\in C$ such that $u$ and $v$ are neighbours. The code $C$ is said to {\it totally cover} a vertex $u$ if we have $I(u,C)\setminus \{u\}\neq \emptyset$. We say that $C$ {\it separates} two distinct vertices $u,v$ if $I(u,C) \neq I(v,C)$.

The code $C$ is a {\it covering code} (or a dominating code) of $G$ if $C$ covers all the vertices of $G$. It is a {\it total-dominating code} of $G$ if it totally covers all the vertices of $G$. It is an {\it identifying code} if it is a covering code of $G$ that separates all pairs of distinct vertices of $G$. It is a {\it locating-dominating code} if it is a covering code of $G$ that separates all pairs of distinct vertices of $G$, where neither of the vertices of a pair belongs to $C$. This terminology is standard, see~\cite{coho-97} for more about covering codes and \cite{lobs} for a bibliography of problems about identifying codes and locating-dominating codes.

Many papers in the recent literature deal with these codes considered in regular structures, such as hypercubes \cite{cchl, exoo, m06}, lattices \cite{bglm11, bl05, chhl, hl08}, cycles \cite{BCHL04, CLM11, gms06, JL11, rr, xu}, and Cartesian products of graphs \cite{KnsquareKn, mollard}. In the present paper we propose to address the computation of the minimum cardinality of such codes in Sierpi\'nski graphs.

\section{Identifying codes in Sierpi\'nski graphs}

\begin{theorem}\label{thm:idcodes}
The minimum cardinality of an identifying code in a Sierpi\'nski graph $S(n,k)$ is $k^{n-1}(k-1)$.
\end{theorem}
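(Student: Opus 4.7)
The plan is to prove the theorem via matching lower and upper bounds.

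For the lower bound, I exploit the strong constraint that the clique structure imposes on identifying codes. In any clique $K = K(u)$ of $S(n,k)$ and for any $v \in K$ we have $I(v,C) = (K \cap C) \cup (\{w_v\} \cap C)$ when $v$ is inner (with $w_v$ its unique external neighbour) and $I(v,C) = K \cap C$ when $v$ is extreme. Consequently, every vertex of $K$ that is either extreme or inner with external neighbour outside $C$ has identifying set equal to $K \cap C$, so the identifying property forces at most one such vertex per clique. Since for $n \geq 2$ each clique contains at most one extreme vertex, at least $k-1$ inner vertices of $K$ must have their external neighbour in $C$. Summing this inequality over the $k^{n-1}$ cliques and using that $v \mapsto w_v$ is a fixed-point-free involution on the inner vertices (a restatement of the fact that the external edges form a perfect matching on the inner vertices), we conclude that the number of inner vertices in $C$ is at least $k^{n-1}(k-1)$, which yields the lower bound.

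For the upper bound I propose the explicit code
\[
C = \{(i_1,\ldots,i_n) \in V(S(n,k)) \mid i_n \neq i_{n-1}\},
\]
of size $k^n - k^{n-1} = k^{n-1}(k-1)$. Equivalently, from each clique $K((i_1,\ldots,i_{n-1}))$ one removes exactly the vertex $(i_1,\ldots,i_{n-1},i_{n-1})$, which is an extreme vertex of $S(n,k)$ precisely when $i_1 = \cdots = i_{n-1}$; in particular all $k$ extreme vertices are excluded from $C$. Covering is immediate because $|K \cap C| = k - 1 \geq 2$ for every clique $K$, so every identifying set is nonempty. Separation of vertices lying in different cliques is also straightforward: for $v_1 \in K_1$ one has $|I(v_1,C) \cap K_1| = k - 1 \geq 2$, while for any $v_2 \in K_2$ with $K_2 \neq K_1$ we have $|I(v_2,C) \cap K_1| \leq 1$, so $I(v_1,C) \neq I(v_2,C)$.

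The remaining step is to verify separation within each clique. A direct application of the adjacency rule shows that the external neighbour of $(i_1,\ldots,i_{n-1},j)$ with $j \neq i_{n-1}$ is $(i_1,\ldots,i_{n-2},j,i_{n-1})$, whose last two coordinates differ, and which therefore lies in $C$. A similar computation for the inner vertex $(i_1,\ldots,i_{n-1},i_{n-1})$ (when it is inner) produces an external neighbour whose last two coordinates agree, so it lies outside $C$. Hence in a clique $K$ without extreme vertex, exactly one inner vertex (the excluded one) has its external neighbour outside $C$, and in a clique $K$ containing an extreme vertex every inner vertex has its external neighbour in $C$; in both cases the ``at most one indistinguishable vertex in $K$'' condition is satisfied. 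The main obstacle is simply tracking the matching partners through the recursive coordinate structure, but once the explicit form of $w_v$ is in hand the verification is mechanical.
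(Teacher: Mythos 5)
Your proposal is correct and follows essentially the same route as the paper: the lower bound counts, for each of the $k^{n-1}$ cliques, at least $k-1$ forced matching partners in the code (the paper's disjoint sets $M(K)$ are exactly your involution argument), and your explicit code $\{(i_1,\ldots,i_n): i_n\neq i_{n-1}\}$ coincides with the paper's code consisting of the inner vertices of each canonical copy of $S(2,k)$. Your verification of the upper bound is merely more detailed than the paper's, which leaves it as ``easy to see.''
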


\begin{proof}
We shall first prove that for any identifying code $C$ of $S(n,k)$, we have $|C| \geq k^{n-1}(k-1)$. For each inner vertex $u$ of $S(n,k)$, let $m(u)$ denote the unique neighbour of $u$ which does not belong to $K(u)$. For each $k$-clique $K$ of $S(n,k)$, let us consider the set $M(K)=\{m(u) \mid u \text{ is an inner vertex of } K\}$. Clearly, the $M(K)$'s are all disjoint, and the number of such sets is $k^{n-1}$. Now, by way of contradiction, assume that there exist two inner vertices $u, v$ of a given $k$-clique $K$ such that $m(u)$ and $m(v)$ do not belong to $C$. Then $u$ and $v$ would not be separated, a contradiction. Moreover, if $K$ contains an extreme vertex, all the vertices of $M(K)$ must be in $C$ to separate the extreme vertex from the other vertices of
  $K$. Hence, for each $k$-clique $K$, there are at least $k-1$ vertices of $M(K)$ that belong to $C$. Since there are $k^{n-1}$ sets $M(K)$, and they are all disjoint, this implies that the cardinality of $C$ is at least $k^{n-1}(k-1)$.

Now, observe that if $n=2$, then the set of inner vertices of $S(2,k)$ is an identifying code of $S(2,k)$, of cardinality $k(k-1)$. Hence the minimum cardinality of an identifying code of $S(2,k)$ is $k(k-1)$. Now let us consider the general case $S(n,k)$, with $n\geq 2$. Let $C$ be the subset of vertices of $S(n,k)$ such that, for any subgraph $S$ of $S(n,k)$ which is isomorphic to $S(2,k)$, the vertices of $S$ that belong to $C$ are exactly the inner vertices of $S$. It is easy to see that $C$ is an identifying code of $S(n,k)$, of cardinality $k^{n-1}(k-1)$.
\end{proof}

\section{Locating-dominating codes in Sierpi\'nski graphs}

\begin{theorem}\label{thm:licodes}
The minimum cardinality of a locating-dominating code in a Sierpi\'nski graph $S(n,k)$ is $\frac{k^{n-1}(k-1)}{2}$.
\end{theorem}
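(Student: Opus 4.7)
My plan is to establish matching lower and upper bounds, following the scheme of Theorem~\ref{thm:idcodes}: a local counting argument per $k$-clique for the lower bound, and an inductive construction for the upper bound.

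For the lower bound, I fix a locating-dominating code $C$ of $S(n,k)$ and, for each $k$-clique $K$, set
\[
f_K = |\{v \in K : v \in C \text{ or } (v \text{ is inner and } m(v) \in C)\}|,
\]
where $m(v)$ denotes the unique neighbour of an inner vertex $v$ outside $K(v)$. Since any two distinct non-code vertices $u, v$ of $K$ share the same intersection $I(\cdot,C) \cap K = K \cap C$, their separation must come from $\{m(u)\} \cap C$ and $\{m(v)\} \cap C$; a short case analysis (on whether $K$ contains an extreme vertex and, if so, whether it belongs to $C$) will yield $f_K \geq k-1$ in every case. Summing over the $k^{n-1}$ cliques and using that $v \mapsto m(v)$ is an involution on the inner vertices, the total becomes
\[
\sum_K f_K = |C| + z,
\]
where $z = |\{w \in C : w \text{ inner and } m(w) \notin C\}| \leq |C|$. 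Hence $2|C| \geq k^{n-1}(k-1)$.

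For the upper bound I induct on $n$. In the base case $n=2$, $S(2,k)$ consists of $k$ cliques joined by a matching of size $\binom{k}{2}$; I plan to pick one endpoint from each matching edge, guided by any orientation of $K_k$ in which every vertex has in-degree at least one (such orientations exist since $k \geq 3$), so that every clique contains at least one code vertex. A direct case check (same clique vs. across two cliques, extreme vs. inner) will confirm that the resulting set of size $\binom{k}{2}$ is locating-dominating. For the inductive step, I form $C_n$ by transplanting a minimum LD code of $S(n-1,k)$ into each of the $k$ subcopies $S^0,\ldots,S^{k-1}$ of $S(n-1,k)$ that make up $S(n,k)$; the cardinality $\frac{k^{n-1}(k-1)}{2}$ is immediate, and coverage together with within-subcopy separation follows from the inductive hypothesis.

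The delicate step is verifying cross-subcopy separation. Suppose $v \in S^i$ and $v' \in S^j$ are non-code vertices with $i \neq j$ and $I(v,C_n) = I(v',C_n)$. Intersecting with $S^i$ yields $I_{S^i}(v, C_n \cap S^i) \subseteq \{m(v')\}$; the left-hand side is non-empty by the LD property inside $S^i$, so $v'$'s outside neighbour must lie in $S^i$, forcing $v'$ to be the unique vertex of $S^j$ joined to $S^i$ and $m(v') \in C$. A symmetric analysis in $S^j$ gives $m(v) \in C$; but $m(v) = v'$, contradicting $v' \notin C$, which closes the argument.
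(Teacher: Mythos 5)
Your proposal is correct and takes essentially the same approach as the paper: your quantity $f_K$ is exactly the number of parts of the paper's partition $\mathcal P$ (extreme-vertex singletons and crossing-edge pairs) that meet $K$ and contain a code vertex, and your identity $\sum_K f_K = |C|+z \le 2|C|$ is the same double counting via the involution $v\mapsto m(v)$; the construction (one endpoint per crossing edge with every clique hit, replicated across subcopies) is also the paper's. The only difference is presentational: you organize the upper-bound verification as an induction on $n$ and actually carry out the cross-subcopy separation argument that the paper dismisses as ``easy to see'', which is a welcome addition but not a different method.
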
 
 
\begin{proof}
Let us first show that if $C$ is a locating-dominating code of $S(n,k)$, then $|C| \geq \frac{k^{n-1}(k-1)}{2}$. Let $C$ be a locating-dominating code. 
We call an edge $uv$ a {\em crossing edge} if it is a maximal clique itself, that is to say $K(u)\neq K(v)$.
The set of crossing edges is a matching and each inner vertex belongs to exactly one crossing edge. Consider now the following partition $\mathcal P$ of the vertices of $S(n,k)$: Each extreme vertex is itself a part of $\mathcal P$ and each pair of vertices $\{u,v\}$ forming a crossing edge is a part of $\mathcal P$.
For each vertex $u$ of $S(n,k)$, let $P(u)$ be the set of $\mathcal P$ containing $u$.

By way of contradiction, assume that there exist two vertices $u, v$ of a given $k$-clique $K$ such that $P(u)\cup P(v)$ contains no element of $C$. Then $u$ and $v$ are not in the code and not separated, a contradiction. Hence, for any $k$-clique $K$, at least $k-1$ sets among the $k$ sets of $\mathcal P$ containing one vertex of $K$ must contain a vertex of $C$. A set of $\mathcal P$ can contribute to at most two different $k$-cliques and there are $k^{n-1}$ $k$-cliques. Therefore, $|C| \geq \frac{k^{n-1}(k-1)}{2}$.

Now, let us consider any code $C$ such that for each $k$-clique $K$, $k-1$ of the sets of $\mathcal P$ containing a vertex of $K$ contain at least one vertex of $C$. It is easy to see that if this code $C$ is such that each $k$-clique of $S(n,k)$ contains at least one vertex of $C$, then $C$ is a locating-dominating code of $S(n,k)$. One can construct such a code $C$ of cardinality $\frac{k^{n-1}(k-1)}{2}$ with the following process.
\begin{itemize}
\item For $n=2$, consider any code $C$ with exactly one vertex on each crossing edge and at least one vertex on each clique. To construct such a code, one can for example take an hamiltonian cycle of $K_k$, and for the corresponding crossing edge (there is a natural bijection between the crossing edges of $S(2,k)$ and the edges of a $k$-clique), take every second vertex.
Clearly, this code has size $\frac{k(k-1)}{2}$.
\item For $n\geq 3$, on each subgraph $S$ of $S(n,k)$ which is isomorphic to $S(2,k)$, we put in the code $C$ the vertices of $S$ corresponding to the previous code of $S(2,k)$. There are $k^{n-2}$ such (disjoint) subgraphs $S$, hence, $C$ has size $\frac{k^{n-1}(k-1)}{2}$.
\end{itemize}

\end{proof}

 \section{Total-dominating codes in Sierpi\'nski graphs}

 \begin{theorem}\label{thm:tdcodes} 
 Let $\gamma_t(G)$ denote the minimum cardinality of a total-dominating code in a graph $G$. Then we have $\gamma_t(S(n,k))=\begin{cases} 
k^{n-1},  & \mbox{if }k \mbox{ is even} \\
k^{n-1} + 1, & \mbox{if }k\mbox{ is odd} 
\end{cases}\,.$
\end{theorem}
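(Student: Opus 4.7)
The plan is to obtain the lower bound by a degree count, upgrade it for $k$ odd via a parity argument, and then match these bounds by a construction that pairs $k$-cliques through crossing edges.

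For the lower bound, let $a$ and $b$ denote the number of extreme and inner vertices in a total-dominating code $C$ of $S(n,k)$. Every vertex must have a neighbour in $C$, so summing degrees over $C$ gives $(k-1)a+kb\geq k^n$, whence $|C|=a+b\geq k^{n-1}$, with equality forcing $a=0$ and every vertex of $S(n,k)$ to have exactly one neighbour in $C$. Now assume $k$ is odd and $|C|=k^{n-1}$. For any $k$-clique $K$, write $c_K=|C\cap K|$ and let $i_K$ count the inner vertices of $K$ whose crossing partner belongs to $C$; summing ``exactly one code-neighbour'' over the vertices of $K$ yields $(k-1)c_K+i_K=k$. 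Since $k\geq 3$, this forces $c_K\in\{0,1\}$, and $c_K=1$ whenever $K$ contains an extreme vertex. Since $\sum_K c_K=k^{n-1}$ equals the number of cliques, one gets $c_K=i_K=1$ for every $K$, so each clique's unique code vertex is inner and paired with its crossing partner (which is the code vertex of another clique). The resulting clique-to-clique assignment is a fixed-point-free involution on the $k^{n-1}$ cliques, so $k^{n-1}$ must be even, a contradiction. Hence $|C|\geq k^{n-1}+1$ when $k$ is odd.

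For the upper bound I would exploit the fact that the clique-adjacency graph of $S(n,k)$ --- vertices are the $k$-cliques (identified with their prefixes $(i_1,\dots,i_{n-1})$), with an edge whenever two cliques are joined by a crossing edge --- is isomorphic to $S(n-1,k)$. Matching the cliques of $S(n,k)$ via crossing edges therefore reduces to finding a matching in $S(n-1,k)$. For $k$ even, I would prove inductively that every $S(m,k)$ admits a perfect matching (base $S(1,k)=K_k$; inductive step: take the union of perfect matchings in the $k$ subcopies). For $k$ odd, a parallel induction produces a near-perfect matching of $S(n-1,k)$, arranging the leftover vertex of each subcopy to be an extreme-like vertex so that $k-1$ of them can be paired via the top-level crossing edges between subcopies, leaving exactly one vertex unmatched.

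Given the matching of cliques, the code consists of both endpoints of every matched crossing edge, plus (when $k$ is odd) two adjacent vertices of the unique unmatched clique $K^*$, chosen distinct from the extreme vertex of $K^*$ if any. Total-domination follows directly: inside a matched clique $K$ the unique code vertex dominates the rest of $K$ and is itself dominated by its crossing partner, while in $K^*$ the two added vertices dominate each other and every other vertex of $K^*$. The resulting cardinalities are $k^{n-1}$ and $k^{n-1}+1$, matching the lower bound. The main obstacle I anticipate is the near-perfect matching of $S(n-1,k)$ for $k$ odd: the induction needs enough flexibility to choose which extreme-like vertex is left unmatched in each subcopy so that the leftovers are consistent with the top-level crossing edges.
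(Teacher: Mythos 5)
Your proposal is correct but reaches the result by a genuinely different route in two places. For the lower bound $k^{n-1}$, the paper argues locally: it shows every subgraph isomorphic to $S(2,k)$ must contain at least $k$ code vertices (either every $k$-clique of that subgraph meets $C$, or some empty clique forces $2(k-1)\geq k$ vertices in the neighbouring cliques). You instead use the classical degree count $\sum_{c\in C}\deg(c)\geq |V|$, which gives $(k-1)a+kb\geq k^n$ and hence $|C|\geq k^{n-1}$ directly; this is cleaner, and its equality analysis ($a=0$, every vertex having exactly one code neighbour) feeds seamlessly into the odd case. Your identity $(k-1)c_K+i_K=k$ then forces $c_K=i_K=1$ for every clique and yields the fixed-point-free involution on the $k^{n-1}$ cliques; the paper's parity argument is the same in spirit (a perfect matching on an odd number of code vertices) but more terse -- in particular it does not spell out why a code of size $k^{n-1}$ must meet every clique exactly once, which your equality conditions deliver for free. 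For the construction, the paper works copy-by-copy: a perfect (resp.\ maximum) matching of $K_k$ inside each copy of $S(2,k)$, with an iterative patching by extreme vertices when $k$ is odd. Your reformulation via the clique-adjacency graph being isomorphic to $S(n-1,k)$ is a nice structural observation that turns the whole construction into ``find a perfect (or near-perfect) matching of $S(n-1,k)$,'' and for $k$ even the union-of-subcopies induction is immediate. The one point you flag as an obstacle -- the near-perfect matching of $S(n-1,k)$ for $k$ odd -- is real but easily closed: strengthen the induction hypothesis to ``for every extreme vertex $e$ of $S(m,k)$ there is a near-perfect matching missing exactly $e$''; then in $S(m,k)$ leave subcopy $i$'s extreme vertex $(i,\dots,i)$ unmatched, pair the remaining $k-1$ subcopies by a perfect matching of $K_{k-1}$, and for each matched pair $\{i,j\}$ let the leftovers be $(i,j,\dots,j)$ and $(j,i,\dots,i)$, which are joined by a top-level crossing edge. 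This is essentially what the paper's iterative procedure does implicitly, so with that induction written out your proof is complete and, in the lower-bound part, tighter than the original.
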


\begin{proof} 
We first show that in any subgraph isomorphic to $S(2,k)$, at least $k$ vertices must belong to a total-dominating code. Let $S$ be a subgraph isomorphic to $S(2,k)$ and let $C$ be a total-dominating code of $S(n,k)$. If all the $k$-cliques of $S$ contain at least one vertex of $C$, then we are done. Hence, let us assume that there is a $k$-clique $K$ of $S$ containing no vertex of $C$. This implies that $K$ does not contain any extreme vertex. Then each vertex $u$ of $K$ has exactly one neighbour $m(u)$ that is not in $K$, and must therefore be in $C$. In addition, at least one neighbour of $m(u)$, which is in $K(m(u))$, must also be in $C$. 
Each other $k$-clique of $S$ must then contain at least two vertices of $C$. Hence $S$ contains at least $2(k-1)\geq k$ vertices of $C$, and we are done. This proves that the cardinality of a total-dominating code is greater than or equal to $k^{n-1}$.

Now we distinguish two cases. First let $k$ be an even number. Let $C$ bet the set of vertices of $S(2,k)$ induced by a perfect matching of $K_k$, such that vertices $(i,j)$ and $(j,i)$ of $S(2,k)$ belong to $C$ if and only if edge $\{i,j\}$ of $K_k$ belongs to the perfect matching. Set $C$ covers all vertices of $S(2,k)$, and since any vertex from $C$ has also a neighbour from another clique which is also in $C$, then the set $C$ is a total-dominating code. Hence $C$ is a minimum total-dominating code of $S(2,k)$. Using the same arguments for each copy of $S(2,k)$ in $S(n,k)$ it follows that $\gamma_t(S(n,k))= k^{n-1}$, when $k$ is even. Note that since each vertex of a total-dominating code $C$ must be a neighbour of another vertex of $C$, then all minimum total-dominating codes can be constructed in such a way.

When $k$ is odd, it is impossible to find a total-dominating code of $k^{n-1}$ vertices intersecting each $k$-clique exactly once. Indeed, in this case, each vertex of $C$ would be neighbour to exactly one another vertex of $C$. Hence there would be a perfect matching running on an odd number of vertices, which is impossible. Hence, when $k$ is odd, the cardinality of a total-dominating code is greater than or equal to $k^{n-1}+1$.

Let us first consider the case $n=2$. To construct a minimum total-dominating code, we can similarly take a maximum matching of $K_k$, and build set $C$ in the same way as in the previous case. In this case one $k$-clique of $S(2,k)$ is not dominated by $C$ and we must add two of its vertices to $C$ to get a minimum total-dominating code. This gives altogether one additional vertex in $C$ compared to the size of $C$ in the even case. 

For $n=3$ we choose in each copy of $S(2,k)$ inside $S(3,k)$, a subset of vertices induced by a maximum matching of $K_k$ and put them in the set $C$. For a $k$-clique that is not dominated, we choose the extreme vertex of $S(2,k)$ and put it in set $C$. We can choose $k$-cliques belonging to different copies of $S(2,k)$ so that they induce a maximum matching in $K_k$, and choose corresponding extreme vertices and put them into $C$. Similarly as in the case $n=2$, one $k$-clique remains which has only one vertex in $C$ and therefore we must add one more vertex from this $k$-clique to $C$ in order to get a minimum total-dominating code of $S(3,k)$. By iteratively repeating this procedure we construct a minimum total-dominating code of $S(n,k)$, for any given $n\geq 4$. Therefore it follows that $\gamma_t(S(n,k))= k^{n-1}+1$, when $k$ is odd.
\end{proof}

\section{Remarks}
 
Foucaud et al. \cite{foklkora-sub} propose the following conjecture about an upper bound of the cardinality of a minimum identifying code of a given graph in terms of its number of vertices and maximum degree. We recall that two adjacent vertices are said to be \emph{twins} if they have the same neighbourhood. A graph having no twins is said twin-free. A graph admits an identifying code if and only if it is twin-free.

\begin{conjecture}[\cite{foklkora-sub}]
For every connected twin-free graph $G$ of maximum degree $\Delta\ge~3$, the minimum cardinality of an identifying code of $G$ is less than or equal to $\Big\lceil|V(G)|-\frac{|V(G)|}{\Delta(G)}\Big\rceil$.
\end{conjecture}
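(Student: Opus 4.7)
The plan is to reformulate the conjecture as a vertex-removal problem: find a subset $S \subseteq V(G)$ of size at least $\lfloor |V(G)|/\Delta \rfloor$ such that $V(G) \setminus S$ is still an identifying code of $G$. Since $G$ is twin-free, the full vertex set is itself an identifying code, so we only need to delete enough vertices from it. Two conditions must be preserved after removing $S$: domination (every vertex has a closed neighbour outside $S$) and separation (for every pair $u \neq v$, the symmetric difference $N[u] \triangle N[v]$ is not entirely contained in $S$). A clean way to guarantee domination is to insist that $S$ be an independent set, in which case each vertex of $S$ has all its neighbours outside $S$, and each vertex outside $S$ dominates itself. Since a connected graph has an independent set of size at least $\lceil |V(G)|/(\Delta+1) \rceil$, we are already quantitatively close to the target $|V(G)|/\Delta$, and the whole game becomes squeezing out the remaining gap while respecting separation.

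For separation, I would call a pair $\{u,v\}$ \emph{fragile} when $|N[u] \triangle N[v]|$ is small; these are the dangerous pairs, since their separators can more easily all fall inside $S$. First I would handle the generic case where every pair of distinct non-twin vertices satisfies $|N[u] \triangle N[v]| \geq c \log |V(G)|$ for a suitable constant $c$. A random $S$ built by sampling each vertex independently with probability slightly larger than $1/\Delta$ and then greedily destroying edges inside $S$ has expected size close to $|V(G)|/\Delta$; a union bound over all pairs, combined with Chernoff-type concentration for $|S|$, yields with positive probability an independent $S$ of the desired size for which no pair has all its separators inside $S$.

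The main obstacle is handling pairs with very small symmetric difference, in particular pairs with $|N[u] \triangle N[v]| \in \{1,2\}$. For such pairs, the unique or very few separators are forced to remain outside $S$, and these constraints can cascade locally. My plan is to build a conflict hypergraph $H$ whose hyperedges are the sets $N[u]\triangle N[v]$ over all pairs $u \neq v$, and to apply a Lov\'asz Local Lemma style argument to a biased random selection of $S$, with the selection probability tuned to $1/\Delta$ and the dependency structure of $H$ controlled in terms of $\Delta$. The crucial quantity is the maximum number of small hyperedges a given vertex can lie in, and bounding it purely in terms of $\Delta$ is the delicate point that I expect to be the main technical hurdle.

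Finally I would absorb residual bad events by an alteration step: start with a slightly oversized random $S$, remove one vertex for each surviving conflict (unseparated pair or independence violation), and check that enough vertices remain to hit the $|V(G)|/\Delta$ threshold. The most serious difficulty I anticipate comes from extremal families such as long paths and cycles, incidence-like bipartite structures, and certain dense regular twin-free graphs, where many pairs have constant-sized symmetric differences and the probabilistic approach breaks down. These should be treated separately by an explicit periodic or algebraic construction achieving the ratio $1 - 1/\Delta$, and then merged with the probabilistic argument to cover all remaining graphs of maximum degree $\Delta \geq 3$.
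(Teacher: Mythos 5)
The statement you are attempting is not a theorem of this paper: it is an open conjecture of Foucaud, Klasing, Kosowski and Raspaud, quoted in the final remarks only so that the authors can observe that Sierpi\'nski graphs attain the bound with equality. The paper contains no proof of it, so there is no argument to compare yours against; what can be assessed is whether your proposal constitutes a proof on its own, and it does not. Your reformulation --- delete an independent set $S$ with $|S|\ge\lfloor |V(G)|/\Delta\rfloor$ while keeping every symmetric difference $N[u]\triangle N[v]$ not fully contained in $S$ --- is a reasonable and fairly standard way to attack the problem, but the two places you yourself flag as delicate are exactly where the conjecture is hard, and they are left unresolved. First, pairs with $|N[u]\triangle N[v]|\in\{1,2\}$ are not a rare pathology: a twin-free graph of maximum degree $\Delta$ can contain $\Omega(|V(G)|)$ such pairs, each corresponding bad event then has probability on the order of $1/\Delta$ under your sampling, and the dependency degree of the conflict hypergraph is not bounded by anything that makes the Local Lemma condition hold; no tuning of the sampling probability around $1/\Delta$ repairs this. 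Second, the independence-number guarantee $|V(G)|/(\Delta+1)$ falls strictly short of the required $\lfloor |V(G)|/\Delta\rfloor$, and closing that multiplicative gap is not a minor alteration step --- the known partial results in the literature lose a constant factor precisely here, obtaining bounds of the form $|V(G)|-|V(G)|/\Theta(\Delta)$ rather than the conjectured $|V(G)|-|V(G)|/\Delta$.

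Moreover, deferring the ``extremal families'' to an unspecified explicit periodic or algebraic construction makes the case analysis non-exhaustive: you never characterise the class of graphs on which the probabilistic argument is supposed to succeed, so the two halves of your plan do not provably cover all connected twin-free graphs of maximum degree at least $3$. As written, the proposal is a research programme rather than a proof, and the statement should remain labelled a conjecture.
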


From the results of this paper it follows that  Sierpi\'nski graphs attain this bound. It might be interesting to consider another families of graphs attaining this bound that might be constructed in a similar manner as Sierpi\'nski graphs or to provide a counterexample to the given conjecture.

\end{document}